\documentclass[conference]{IEEEtran}
\IEEEoverridecommandlockouts
\usepackage{hyperref} \hypersetup{colorlinks=true, breaklinks=true, urlcolor=black, linkcolor=black, citecolor=black}
\usepackage{amsmath,amssymb,amsfonts}
\usepackage{graphicx}
\usepackage{subcaption} 
\usepackage{textcomp}
\usepackage{xcolor}
\usepackage{xspace}
\usepackage{pifont}
\def\BibTeX{{\rm B\kern-.05em{\sc i\kern-.025em b}\kern-.08em
    T\kern-.1667em\lower.7ex\hbox{E}\kern-.125emX}}
\usepackage{pgf-umlsd}
\usepackage{tikz}
\usepackage{comment}
\usetikzlibrary{shapes,arrows}
\usepackage{ulem}
\usepackage{cite}
\usepackage[para]{footmisc}
\usepackage{algorithmicx}
\usepackage{algpseudocode}
\usepackage{algorithm}
\usepackage{macrosForFunAndProfit}
\newtheorem{assumption}{Assumption}

\newtheorem{lemma}{Lemma}
\newtheorem{theorem}{Theorem}
\newtheorem{proof}{Proof}


\begin{document}

\algrenewcommand\algorithmicprocedure{\textbf{upon}}
\algtext*{EndProcedure}
\algtext*{EndIf}
\algtext*{EndFor}
\newcommand{\fanta}{FANTASTYC\xspace}
\newcommand{\poia}{{PoA$\&$I}\xspace}
\newcommand{\flaggr}{Aggregator\xspace} \newcommand{\flaggrs}{Aggregators\xspace}
\newcommand{\flclient}{Learner\xspace} \newcommand{\flclients}{Learners\xspace}
\newcommand{\flowner}{Model Owner\xspace} \newcommand{\flowners}{Model Owners\xspace}
\newcommand{\bcval}{Validator\xspace} \newcommand{\bcvals}{Validators\xspace}
\newcommand{\bcclient}{Blockchain Participant\xspace} \newcommand{\bcclients}{Blockchain Participants\xspace}
\newcommand{\dsreplica}{storage replica\xspace} \newcommand{\dsreplicas}{storage replicas\xspace}
\newcommand{\dsclient}{distributed storage client\xspace} \newcommand{\dsclients}{distributed storage clients\xspace}
\newcommand{\adp}[1]{\textcolor{violet}{{Anto: \bf {#1}}}}
\newcommand{\st}[1]{\textcolor{blue}{{Sara:\bf {#1}}}}
\newcommand{\agp}[1]{\textcolor{orange}{{Alvaro: \bf {#1}}}}
\newcommand{\am}[1]{\textcolor{purple}{{Aure: \bf {#1}}}}
\newcommand{\sg}[1]{\textcolor{green}{{Steph: \bf {#1}}}}
\newcommand{\wb}[1]{\textcolor{red}{{William: \bf {#1}}}}

\title{Fantastyc: Blockchain-based Federated Learning Made Secure and Practical}

\author{
\IEEEauthorblockA{
    \begin{tabular}{c}
        William Boitier, Antonella Del Pozzo, Álvaro García-Pérez, Stephane Gazut, Pierre Jobic, Alexis Lemaire,\\
        Erwan Mahe, Aurelien Mayoue, Maxence Perion, Tuanir Franca Rezende, Deepika Singh, Sara Tucci-Piergiovanni
    \end{tabular}
} 

\IEEEauthorblockA{ 
    \textit{Université Paris-Saclay, CEA, List} \\
    F-91120, Palaiseau, France \\
    \{name.surname@cea.fr\}   
}
}
\maketitle

\begin{abstract}
Federated Learning is a decentralized framework that enables multiple clients to collaboratively train a machine learning model under the orchestration of a central server without sharing their local data. The centrality of this framework represents a point of failure which is addressed in literature by blockchain-based federated learning approaches. While ensuring a fully-decentralized solution with traceability, such approaches still face several challenges about integrity, confidentiality and scalability to be practically deployed. In this paper, we propose Fantastyc, a solution designed to address these challenges that have been never met together in the state of the art.


\end{abstract}

%

\section{Introduction}\label{sec:intro}
Federated Learning (FL) has emerged as an important paradigm in modern large-scale machine learning\cite{McMahan2017}. Unlike in traditional centralized learning where models are trained using large datasets stored in a central server, in federated setting, the training data remains distributed over a large number of clients, which may be phones, network sensors, hospitals or companies. A federated model is then trained under the supervision of a central server by aggregating only ephemeral locally-computed updates, thereby ensuring a basic level of privacy by design. However, FL still faces key challenges\cite{Kairouz2019light} to be fully accepted in both institutional and industrial sectors. First, the training process should provide guarantees to ensure its robustness to malicious actors who try to degrade the accuracy of the model (integrity) or retrieve sensitive information about isolated training data (privacy). Second, the federated process should not collapse even if there is a large number of participants to the training step (scalability). Third, the server which orchestrates the federated process is a central player which potentially represents a single point of failure. While large companies or organizations can play this role in some application scenarios, a reliable and powerful central server may not always be available or desirable in more collaborative learning scenarios. 

In the literature, a growing research effort  is dedicated to propose fully decentralized learning through blockchain-based Federated Learning (BC-based FL) \cite{BAFFLE19,FLchain19,BlockFL20,BFEL20, BlockFLA21, Biscotti21, ChainFL20, VFChain22}. Blockchains offer a secure, shared ledger among users, where information is stored and verified by a peer-to-peer network of nodes. The peer-to-peer network adds new information to the ledger thanks to a consensus mechanism tolerant to malicious participants. 
As a result, blockchains provide a means to eliminate the need for trust in a central server. Indeed, the blockchain could be used to validate and store updates and models to secure the FL process. Additionally, blockchains incorporate native incentive mechanisms. These mechanisms encourage users to participate honestly in the system, leveraging  the audit trail of users participation registered in the blockchain. 

To the best of our knowledge, however, current BC-based FL solutions do not address all the challenges as a whole, which makes them not fully secure or practical. A fundamental issue lies in the scalability of blockchains, particularly concerning the latency required to register transactions, a factor influenced by the size of the registered data and the number of clients. In this paper, our objective is to develop a BC-based FL system that ensures both model integrity with proven guarantees, addresses privacy concerns, and maintains practicality even with a large number of clients. Our goal is to implement realistic model training scenarios, akin to those used in classical benchmarking for FL image classification tasks (such as Federated Extended MNIST), involving thousands of clients and models of several megabytes in size. 


To this end, we propose Fantastyc. In our design, we leverage blockchain technology with a specific emphasis on reducing interactions with it, all the while capitalizing on its inherent advantages.   To enhance efficiency, especially given the resource-intensive nature of validating updates and models, we offload the validation process -- meant to be handled by the blockchain -- to a dedicated set of servers. These servers handle aggregation and generate validity proofs, ensuring the integrity of the FL process. Clients can efficiently verify model integrity by examining these validity proofs. Moreover, we tackle the challenge of managing large-sized data that cannot be directly stored on-chain. Our solution involves employing a fault-tolerant key-value store for registering client updates and models as follows.  In each round of the FL process, clients updates are gathered by servers, stored in the storage and cryptographically anchored in the blockchain alongside validity proofs, to record client participation. By reading the updates anchored in the blockchain, each server retrieves an identical set of client updates before aggregating them and produce the next model. %
Fig. \ref{fig:intro} illustrates these ideas, highlighting the additional components we use in our architecture with respect to a centralized, non-Byzantine tolerant solution. The figure focuses on the exchange of data in a single FL iteration. In the centralized version, the clients performing the local training send their updates (the gradients) to the server for that iteration. The server performs aggregation and sends back the aggregated model to clients. In our solution, all participants register in the system via the blockchain before engaging in the learning process. The aggregation is then handled by a set of participants who register to play the role of servers (where a minority can be  Byzantine). Servers perform robust aggregation \cite{Yin2018} to filter out Byzantine client contributions while engaging in a protocol to certify the validity of the obtained aggregated model to send back to clients. At the conclusion of this protocol, a validity proof is generated. Both clients' updates and aggregated models are stored in a replicated storage, while their anchors (i.e., cryptographic hashes) are stored in the blockchain, along with the proofs. Interestingly, servers also write via the proof the contributions from the clients they retained for the aggregation, establishing a trail of client contributions to be exploited for rewarding.

\begin{figure}
    \centering
    \begin{subfigure}[b]{0.4\textwidth}
        \centering
       \includegraphics[width=.8\linewidth]{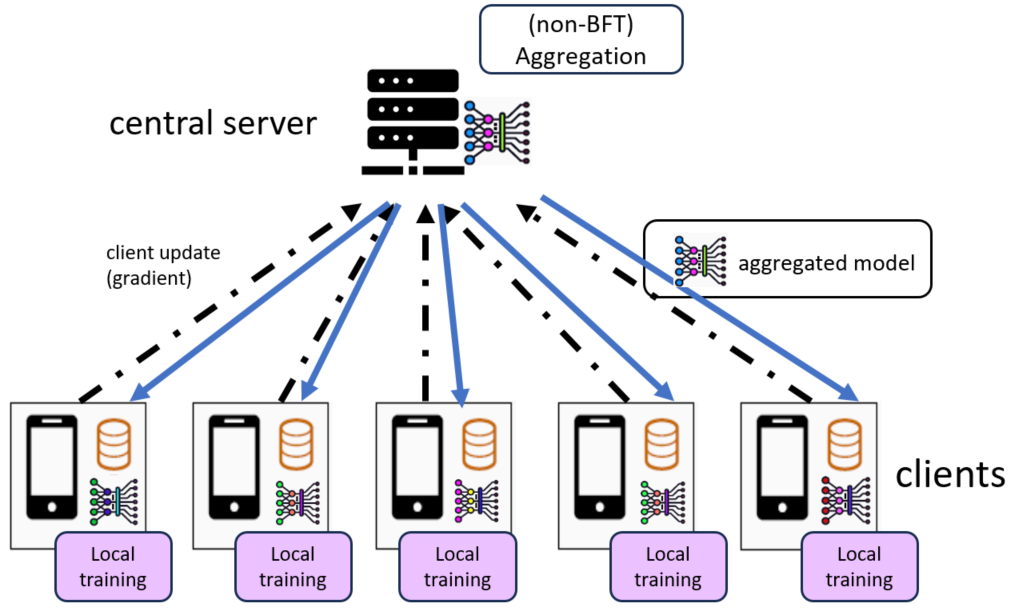}
        \caption{Federated Learning with a central server.}
        \label{fig:sub-intro1}
    \end{subfigure}
    \hfill
    \begin{subfigure}[b]{0.4\textwidth}
        \centering
        \includegraphics[width=1\linewidth]{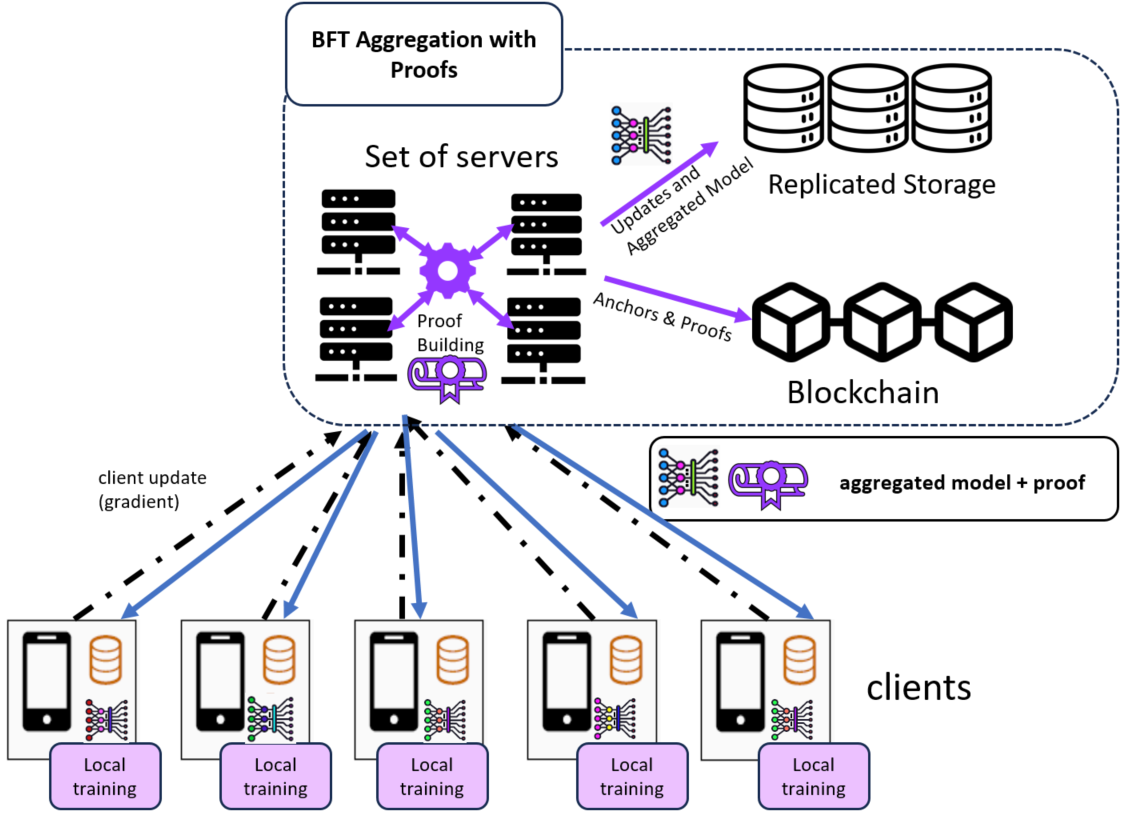}
        \caption{Federated Learning with a BFT set of servers.}
        \label{fig:sub-intro2}
    \end{subfigure}
    \caption{Difference Federated Learning (a) and our solution (b).}
    \label{fig:intro}
\end{figure}

We also employ a light-weight confidentiality technique to face the risk of data leakage. The method makes it difficult for attackers to recover the personal data from users' dataset. 
Notably, this module does not add any communication overhead while only suffering small accuracy loss.

Overall, our solution achieves the following properties: 
\begin{itemize}
\item \textbf{Robustness}: 
Our solution exhibits robustness against Byzantine attacks \cite{Lamport1982} from both clients and servers, safeguarding the accuracy of the federated model against adversarial attempts to compromise its integrity.

\item \textbf{Enhanced Byzantine Tolerance}: 
Our solution guarantees model integrity and availability by assigning the blockchain the sole responsibility of managing clients' participation and contributions.  This approach allows the set of servers to maintain an optimized fault-tolerant threshold,  requiring a majority of honest servers rather than the conventional two-thirds \cite{Lamport1982}, making it a more secure option. 

\item \textbf{Efficient Resource Use and Proofs}: Our solution optimizes blockchain usage by storing updates and models within an external distributed storage connected to the blockchain. In this setup, the blockchain exclusively holds a cryptographic fingerprint of the data (the anchor), accompanied by a validity proof. This proof certifies the availability and integrity of the data within the storage. This approach ensures that clients can always  retrieve computed models (availability) providing them with a guarantee of retrieving a trustworthy model (integrity). It is important to note that prevalent architectures in the blockchain space, such as those utilizing external storage systems like IPFS, lack computational integrity even when they offer mechanisms for making redundant copies (proof-of-replication \cite{fileCoin}). Indeed, the primary goal of these architectures is to offer tamper-proof guarantees: clients can only ascertain whether someone has tampered with their own data, without ensuring that the data are the output of correct computations.

\item \textbf{Confidentiality}: Our solution limits the risks of information leakage about
the client's data with little loss of accuracy and without adding any communication overhead. Each client has to complete a simple data processing routine before participating in the training process that consists in mixing his own data samples and applying a random sign-flipping pattern.


\item \textbf{Scalability}: Our solution confines the model aggregation to a small number of servers -- outside the blockchain. This ensures a blockchain latency independent of the model size. Additionally, a verifiable random selection process allows us to efficiently poll a subset of clients in different rounds, enabling scalability to accommodate the participation of a large number of clients. Notably, we demonstrate the capability to handle realistic FL processes in a geo-replicated deployment, for the first time for a blockchain-based FL solution.
\end{itemize}

The rest of the paper is organized as follows. Section \ref{sec:background} introduces FL and Blockchain. Section \ref{sec:stateoftheart} discusses the current state of the art of BC-based FL solutions and their limitations. Section \ref{sec:designchoices} delves into design choices of the proposed solution that enhance robustness, Byzantine tolerance, efficient resource utilization and confidentiality. Section \ref{sec:workflow}  presents algorithms and correctness arguments. Section \ref{sec:eval} discusses the implementation and evaluation of the solution in a geo-replicated distributed system. Section \ref{sec:conclusion} concludes the work.

\section{Background and Preliminaries}\label{sec:background}
\subsection{Federated Learning}
In Federated learning (FL), we consider a set of $C$ clients with $n_c$ data that participate in the training of a collaborative model $w \in \Re^d$ under the orchestration of a central server but without sharing their local data due to privacy issues. More formally, if we define $f_c:\Re^d\rightarrow\Re$ as the local objective function for a client $c$, the goal is to minimize the overall objective function $f$ considering the isolated datasets:
\begin{equation}
  \min_{w} f(w) = \min_{w} \sum_{c=1}^C \frac{n_c}{N} f_c(w)\hspace*{0.2cm}\text{where}\hspace*{0.2cm}
  N=\sum_{c=1}^C n_c
\label{eq:loss}
\end{equation}
The FL process consists of a succession of rounds. At each round $t$, the server selects a subset $C_t$ of $K \leq C$ clients that take part in training and sends them the current global model $w_t$. Then, each selected client $k$ uses its dataset to minimize his proper objective function $f_k$ with stochastic gradient descent (SGD), a number of epochs $E$ and a mini-batch size $B$ to update parameters $w_t^k$. Each client $k$ then sends updates $g_t^k = w_t - w_t^k$ to the server who averages the received parameters to obtain a new global model $w_{t+1}$. The aggregation rule depicted by Eq.\ref{eq:fedavg} is known as FedAvg.
\begin{equation}
  w_{t+1}=w_{t}-\sum_{k \in C_t} \frac{n_k}{n} g_t^k\hspace*{0.2cm}\text{where}\hspace*{0.2cm}
  n=\sum_{k \in C_t} n_k
\label{eq:fedavg}
\end{equation}

\subsection{Blockchain}\label{sec:prelim-blockchain}
Blockchain is a decentralized immutable ledger, which consists of a sequence of linked blocks, each containing a set of transactions. Transactions allow blockchain users to immutably record information, which constitutes a common knowledge shared among blockchain users. In particular, this information is transparent, tamper-resistant, and totally ordered. At the core of each blockchain, the consensus mechanism plays a pivotal role in appending new blocks without breaking the chain structure. In our context, we focus on blockchains based on BFT-Consensus, such as Cosmos \cite{cosmos} and Tezos \cite{tezos}, which provide immediate finality, ensuring that once a block is added, it cannot be revoked.

Most blockchains, like Tezos and Ethereum \cite{Buterin2013}, feature smart contracts \cite{szabo} for programmability. Smart contracts are self-executing scripts tied to specific blockchains, deployed and executed through transactions, placing both the generic applications and the blockchain logic on the same layer. In contrast, Cosmos introduces the ABCI (Application BlockChain Interface), separating the blockchain from application logic. With Cosmos ABCI, the blockchain only records transactions, while applications independently process and update their state, enhancing modularity and deployment flexibility.

While blockchain offers security and decentralization, it also faces scalability limitations, particularly in transactions per second and communication latency. For instance, in Tendermint \cite{tendermint} (now CometBFT \cite{abci}), the underlying consensus mechanism for Cosmos, the average throughput is approximately 430 transactions per second with a latency of 3.45 seconds in the presence of 128 correct nodes \cite{tendermintSRDS}. 
These constraints stem from the consensus mechanism and the distributed nature of blockchain, resulting in substantial memory costs as each node maintains an updated replica.
Therefore, blockchain usage must be approached with frugality, emphasizing the efficient and careful use of resources to ensure sustainability and optimize performance.

\section{State of the Art}\label{sec:stateoftheart}

BlockFL \cite{BlockFL20} is a pioneering work that introduces a conceptual architecture where clients record their local updates in a public (proof-of-work) blockchain and perform local aggregation as soon as a sufficient number of local updates are written on the blockchain. 
This approach might be suited for small-sized models that can fit within blocks, even if the paper does not present a concrete solution and evaluation.  Furthermore, the architecture relies on each client being associated with a single miner for sending and downloading blocks. However, a malicious miner could potentially pose a threat by refusing to include a client's local update in the blockchain or by remaining silent, causing availability issues. Finally, in the proposed architecture, miners are expected to validate the correctness of client updates by confirming that each update has been executed across the entire client dataset, because a client might send fake updates without incurring the cost of training. This verification relies on both miners having knowledge of each client's data size and the presence of a trusted execution environment at each client, ensuring the integrity of the clients' contributions. 

FLChain \cite{FLchain19} shares similarities with BlockFL in its use of blockchains to record local updates. However, it diverges from BlockFL by introducing an additional cryptographic layer to encrypt the local updates stored in the blockchain. Aggregation is then executed by (a subset of) blockchain nodes on these encrypted updates, and the resulting aggregated model is stored in the blockchain. Subsequently, aggregated models are collectively decrypted by clients, with each client holding the ability to decrypt only a share of each aggregated model. The approach also includes incentive mechanisms designed to motivate participants to provide uncorrupted shares. Much like BlockFL, this approach does not explicitly address concerns related to Byzantine clients conducting poisoning attacks. Also, due to the substantial utilization of the blockchain and the collaborative decrypting algorithm, it is better suited for smaller-sized models and a few number of clients.

VFChain\cite{VFChain22} shares architectural concepts similar to FLChain and BlockFL. However, it distinguishes itself by providing finely-tuned mechanisms for the dynamic election of an aggregation committee. Nonetheless, the storage of aggregated models in the blockchain is maintained, and experimental evaluations have been conducted with a participant count not exceeding 10. It is important to note that in this system, clients are regarded as honest but curious, while Byzantine behavior is attributed solely to aggregators, with a tolerance for up to one-third of them. In BFLC\cite{BFLC21}, a similar architecture with a committee is used for efficiency, whereas a robust aggregation function is implemented by committee members to defend the system from Byzantine clients. 

BFEL\cite{BFEL20}, in comparison to the aforementioned approaches, introduces a more sophisticated architectural design. It combines a public blockchain with private blockchains, with each private blockchain dedicated to a specific FL task. This approach optimizes storage efficiency in multi-task scenarios and enhances privacy. Additionally, to alleviate communication overhead resulting from the exchange of large models, BFEL applies a gradient compression method. In this system, a committee of verifiers plays a crucial role in filtering out potentially poisoned models submitted by clients. However, it is worth noting that the verification method relies on each verifier using a testing dataset to evaluate the updates, instead of employing more robust aggregation techniques that might offer greater reliability. Experimental evaluations have been conducted with two parallel images classification tasks, in a local machine, without exceeding 22 clients. 

BlockFLA\cite{BlockFLA21} also introduces a hybrid architecture, where a private blockchain is integrated with a public blockchain. In this setup, model updates are entirely stored within the private blockchain and are subsequently linked to the public blockchain for accountability via a smart contract. While the paper mentions the existence of a verification function within the smart contract, it does not provide specific details regarding its functioning.

Finally,  Biscotti \cite{Biscotti21} refines and elaborates on principles stipulated by FLChain\cite{FLchain19} by introducing a three-role structure: noisers create unbiased noise for clients to mask local updates, verifiers endorse masked updates through robust filtering, and aggregators confidentially aggregate the endorsed updates. Aggregators work with encrypted data to ensure privacy. The encrypted aggregated model is stored on the blockchain within commitments to endorsed updates. Unlike our approach, Biscotti, by storing the model on the blockchain, can only handle small models (models with thousands of parameters as opposed to millions in our approach). Additionally, there's uncertainty about the purpose of the noise particularly considering the authors' statement that they do not utilize it at the end due to significant utility loss. Without noise, verifiers access unmasked client updates, posing a privacy risk if even a single verifier is compromised. Therefore, finding the right trade-offs between noise and utility is essential to ensure privacy while using this architecture.

It is also worth to mention some other research efforts, such as those in \cite{BAFFLE19} and \cite{ChainFL20}, that have delved into solutions within the Ethereum blockchain framework to replace the model-aggregating server with smart contracts. However, deploying these smart contracts on public blockchain systems can incur significant costs, mainly due to miner fees. These costs are associated with the computational workload and the substantial storage requirements resulting from full replication on every blockchain node.  As a result, these approaches are typically more suitable for handling simpler models and tasks.

Our solution distinguishes itself from existing state-of-the-art solutions by minimizing blockchain utilization through a novel off-chain Byzantine-tolerant protocol leveraging distributed storage servers, ensuring the integrity and availability of off-chain data and computations without depleting system resources. Moreover, 
our construction seamlessly adapts its protocol to limit the risk of information leakage without introducing any additional overhead while integrating secure guarantees to remain robust in the presence of both Byzantine clients and servers. Notably, it offers an enhanced level of Byzantine tolerance compared to existing solutions, requiring a majority of honest servers rather than the conventional two-thirds, making it a more secure option. 
Interesting, because of its  modular architecture, our solution offers as well the flexibility to be deployed as a private, public, or hybrid blockchain while allowing users to initialize learning tasks with any type of aggregation, selection, and rewarding function. 

\section{Design Choices and Key Ideas}\label{sec:designchoices}

\subsection{Robustness and Enhanced Byzantine Tolerance}\label{subsec:robustness}
Every distributed process is vulnerable to misbehaving nodes that can disrupt the learning procedure. Indeed, arbitrary failures due to hardware defaults may occur or some actors may also try to deliberately corrupt the model. Following the terminology of distributed computing, these nodes are called Byzantine\cite{Lamport1982}. We refer to the other nodes as correct or honest. The robustness of a BC-based FL process is thus ensured when model integrity is preserved despite the presence of both Byzantine clients and servers.

\subsubsection{Byzantine Clients}\label{par:fedavg}
In a BC-based FL process, the servers cannot check whether the updates they receive from the clients are coherent with licit training data, which makes the learning procedure vulnerable to update manipulation. Considering the standard aggregation rule FedAvg (eq.\ref{eq:fedavg}), it is very easy for Byzantine clients to hinder the model's convergence by sending erroneous updates such as $-\lambda_{boost} g_t^{k}$ instead of $g_t^{k}$ (where $\lambda_{boost} \in \left[ 1 , +\infty \right[$ is used to boost the strength of the attack). To mitigate the negative impact of Byzantine nodes, it is common to use a non-linear robust aggregator such as the coordinate-wise median\cite{Yin2018}. Considering $f_c$ Byzantine clients, this Byzantine-tolerant aggregation rule requires at least $2f_c+1$ participants to each round of training.   

\subsubsection{Byzantine Servers}
In the context of ensuring robustness against Byzantine attacks, a critical challenge lies in guaranteeing that clients do not retrieve poisoned models from potentially malicious servers. This challenge can be faced harnessing the power of blockchain technology to ensure the integrity of the models collected by clients.
To illustrate this approach, let us envision the blockchain as a ledger -- an immutable, ordered list of records accessible for both clients and servers to write and read. We present two distinct strategies:

(i) In the first strategy, each server collects a (possibly different) set of client updates, performs robust aggregation, and subsequently records the resulting model on the blockchain. It is worth noting that due to network delays and message losses, each server may receive a different subset of updates. In this scenario, clients retrieve from the ledger, in each round, the aggregated models from a count of servers not exceeding a specified fault threshold, denoted as $n_s-f_s$, where $f_s$ accounts for potentially silent or slow servers. Clients then engage in robust aggregation to derive the model for the next round. 

(ii) In the second strategy, clients take on the responsibility of directly writing their updates to the blockchain. For each round, servers await the writing in the ledger of updates from a predefined number of clients. 
In this scenario read updates are the same set at each server. Subsequently, servers engage in robust aggregation and synchronize to cast their votes in favor of models that fall within an epsilon range\footnote{It is noteworthy that, owing to the computational intricacies of floating-point calculations, models generated from identical inputs may exhibit slight differences. Empirical observations, using models with 500,000 parameters, have shown disparities up to three decimal places, underscoring the level of precision involved.\label{footnote:floating}}. When a model accumulates at least $f_s+1$ votes, it signifies that at least one honest server has endorsed it, rendering it safe for use by any client. A client retrieves then a model with $f_s+1$ server votes. 

Assuming the client count ($n_c$) surpasses that of servers ($n_s$), the first strategy is better in terms of ledger storage space used. However, it necessitates a higher fault tolerance threshold, requiring $n_s=3f_s+1$ (to account for slow servers in the round and malicious speaking servers). Additionally, it mandates clients to perform robust aggregation on their end. 
The second approach requires a fault tolerance threshold of only $n_s=2f_s+1$ (to account for slow servers in the round and to guarantee that at least $f_s+1$ votes are cast for a model) and more importantly it spares clients from the computational overhead of robust aggregation. We opt then for the second approach to avoid overload clients with robust aggregation computation and because the better tolerance threshold. To maintain practicality, we also  prevent clients from directly writing to the blockchain. 
Instead, we propose a byzantine tolerant protocol where servers gather updates and anchor the collected set in the blockchain with the help of an external fault-tolerant distributed storage, as detailed next. 

\subsection{Efficient Resource Use by Proof  of Availability $\&$  Integrity and Optimally Replicated Fault-tolerant Key-value Store}\label{subsec:efficient-use}
To efficiently utilize the blockchain, we propose a Byzantine-tolerant protocol that allows for the reliable storage of both client updates and aggregated models in a fault-tolerant key-value store. Because we need to achieve consensus only on the client updates set, the only information recorded in the blockchain is the cryptographic hook of client updates contained in the store, referred to as Proof of Availability $\&$ Integrity (PoA$\&$I). The PoA$\&$I is generated each time a value (update or model) is consistently and reliably replicated at enough key-value store replicas. This proof guarantees clients and servers to always retrieve the value (\textit{availability}) and ensures that the value is the result of a correct computation (\textit{computational integrity}) -- for instance the correct model of a robust aggregation. In the blockchain, consensus is reached on the set of proofs uniquely associated with correct client updates retrievable by servers in the store.

At high level, the protocol to generate the PoA$\&$I works as follows: as soon as a server gathers a value  $v$ for which a PoA$\&$I is required, the server will verify if the value is valid. If valid, it stores the pair $(hash(v),v)$ in the key-value store where the hash of $v$ is used as key, and then broadcasts a signed message with the hash of the valid value, called the \textit{hash-key}, to the other servers. As soon as a server receives a $f_s + 1$ quorum of signed valid hash-keys from other servers, it produces a PoA$\&$I\footnote{ PoA$\&$I is similar in spirit to the Proof of Availability and Retrieval \cite{poar}, which, however, does not target computational integrity.  Indeed, \cite{poar} describes a mechanism to produce a proof of availability to implement a secure pull communication mechanism for blocks inside blockchains.}.  In the first step of the protocol, the verification performed to validate values depends on the type of value: in the case $v$ is a client update, the server verifies if the update comes from the set of clients selected for the current round;  if $v$ is a model aggregated by other servers, then the value is considered valid if the received model and the locally computed model are not significantly different, typically within an epsilon distance.



To ensure persistence and counteract potential attackers attempting to erase or modify off-chain data from the storage, the solution mandates the hash-key always to be the cryptographic hash of the corresponding value, if it is not the case, the store reject the value
as it is not possible to associate to a key a value whose hash differs from the key itself.

\subsubsection{One blockchain transaction per round}
To minimize the number of blockchain transactions, instead of recording each client update's  PoA$\&$I on the blockchain immediately, servers wait to generate a sufficient number of client updates' PoA$\&$I. When this threshold is reached, they write a set of updates' PoA$\&$I  as a candidate input for the next robust aggregation on the blockchain by sending a transaction. Since transactions on the blockchain are totally ordered, the first set of updates, each with a valid PoA$\&$I, recorded on-chain is the one utilized for robust aggregation by all correct servers. It is important to note that after agreeing on a set of PoA$\&$Is, servers perform robust aggregation and generate a PoA$\&$I for valid models computed and stored. The PoA$\&$I of the aggregated model  is then sent to clients, who can retrieve models directly from the store without interacting with the blockchain.



\subsubsection{Optimal replication}

It is worth noting that  our approach decouples the ordering (blockchain side), the integrity (server side) and availability (storage side) of the information. 
This decoupling allows to implement those properties employing the optimal number of nodes. As discussed in  Section \ref{subsec:robustness}, the number of servers is at least $2f_s+1$. 
Concerning the key-value store, $f_r+1$ replicas suffices, where $f_r$ is the upper bound on the number of Byzantine replicas and each replica is replicating the content of the key-value store. 
Thanks to the  PoA$\&$I  associated to the hash-key, a client querying the store can assess the validity of a returned value by checking if the retrieved value, once hashed, is equal to the required hash-key. As a consequence, to get the value, it is sufficient to get a correct answer from at most one correct storage replica, which exists, given the existence of the PoA$\&$I.

Let us stress that, only updates associated with a PoA$\&$I are considered part of a valid updates set. This ensures that any set recorded in the blockchain, if valid, contains updates available in the storage and produced by selected clients. This mechanism provides certainty to a server; if they miss some updates necessary to aggregate the next model, they are assured they can retrieve them in the storage.  
A similar approach applies when servers interact with clients. Indeed, once servers compute the next aggregated model, they store it in the distributed storage, and create a proof before sending it directly to clients. This approach provides clients with verifiable evidence of integrity and availability, mitigating concerns related to computational manipulation. 
Detailed workflows and algorithms can be found in Section \ref{sec:workflow}.


\subsection{Confidentiality}
By design, FL intrinsically protects the data stored on each device by sharing model updates (\textit{i.e.} gradients) instead of the original data. While sharing gradients was thought to leak significantly less information about the clients compared to the raw data, recent papers developed a ``gradient inversion attack" by which honest-but-curious servers are able to reconstruct parts of the client's private data\cite{DLGZhu2019,Geiping2020}. Homomorphic encryption is the gold standard for a privacy-preserving solution but its large computational demand still makes it not compatible with scalability despite recent improvements\cite{choffrut2023practical}. Based on literature, we propose countermeasures to limit the risks of information leakage. We advise clients who want to participate in the training process to apply InstaHide\cite{pmlr-v119-huang20i} data augmentation to their dataset. In a nutshell, InstaHide has two key components. The first step consists in building virtual training samples in the following way: to encode an example $x \in \Re^d$, a client picks $s-1$ other samples from his private dataset and $s$ random nonnegative coefficients $\{\lambda_j\}_{j=1}^{s}$ that sum to~1. A composite (one-hot encoding) label is also created using the same set of mixing coefficients:
\begin{equation}
x=\lambda_1 x+\sum_{j=2}^{s}{\lambda_j x_j} \hspace*{0.2cm}\text{and}\hspace*{0.2cm} y=\lambda_1 y+\sum_{j=2}^{s}{\lambda_j y_j}
\label{eq:mixup}
\end{equation}
As this step permits to train a neural network on linear combinations of examples instead of raw data, an attacker could only retrieve the virtual examples. The capacity of inferring the raw data from the virtual ones requires the knowledge of the mixing coefficients but this information seems hard to obtain in a realistic scenario.
However, InstaHide adds an another layer of security that consists in picking a random sign-flipping pattern $\sigma \in \{-1,1\}^d$ and outputting the encryption $\widetilde{x}=\sigma\circ(\lambda_1 x+\sum_{j=2}^{s}{\lambda_j x_j})$ where $\circ$ is coordinate-wise multiplication of vectors.

\section{System Detailed Design}\label{sec:workflow}
In this section we  first introduce the system model and a formalisation of the \poia,  before presenting the workflow of the FL process, which includes algorithms to build the  \poia (Section~\ref{subsec:workflow}) and methods to manage the non-determinism of the computations (Section~\ref{sec:non-det}). 
We finally present correctness arguments of the solution (Section~\ref{sec:proofs}).

\subsubsection{System model} 
Our system is composed of a finite set of processes\footnote{The term process is used here to denote a computer thread, distinct from the usage of FL process referring to the Federated Learning process. Both terms are employed within their respective contexts to avoid any confusion.}: clients, servers, storage replicas,  Model Owner (MO), all having access to a blockchain. The blockchain, an immutable, ordered list of records, allows processes to communicate and share a common knowledge upon the information securely recorded on it. Clients, servers and storage replicas can also directly communicate. 
More formally, we assume that the system comprises $2f_s + 1$ servers ${S}$ and $f_r + 1$ storage replicas ${R}$, where $f_s$ and $f_r$ represent the upper bounds on Byzantine faulty processes (processes that exhibit an arbitrary behaviour). We assume model owners to be correct. Additionally, we rely on the reliable communication assumption that if 
    a correct process $p$ sends a message $m$ to another correct process $q$, then eventually $q$ receives $m$. Moreover, if a correct process sends a transaction $t$ to the blockchain, then $t$ is eventually totally ordered in the blockchain. All messages are signed by the sender.
We also consider that servers use the same pseudo-random function {\sf select\_clients}, that takes as input the blockchain (which contains the list of clients) to verifiably select the client set for the next round.

\subsubsection{\poia} We refer to the \poia as $P(\tagtt,v)$, a set of cryptographic hashes of $v$ signed by $f_s+1$ servers. Let $\text{{\sf locally\_valid}}(\tagtt, v)$ be a function that returns {\sf true} if $v$ passes the computational integrity check relative to $\tagtt$. A correct server signs $h(v)$ if $\text{{\sf locally\_valid}}(\tagtt, v)$ holds and after storing $v$ on the key-value store.
If a process $p$ receives $P(\tagtt,v)$, then $\text{{\sf locally\_valid}}(\tagtt, v)$ holds for at least one correct server\footnote{When {\sf locally\_valid} is deterministic, it holds for all correct servers.}, and eventually, $p$ gets $v$ from the key-value store.

\begin{algorithm}
\scriptsize
\caption{Server i}\label{algo:server}
\begin{algorithmic}[1]
    \State $bc$; \Comment{Blockchain object}
    \State $t \leftarrow 0$; \Comment{Current round}
    \State $m$; \Comment{Minimum number of expected clients' updates to aggregate them}
    \State $\mathcal{P}(G_t)_{cand}\leftarrow \emptyset$; \Comment{Set of updates' PoA$\&$I $P(h(g_t^k))$ candidate for round $t$}
    \State $\mathcal{P}(G_t)_{commit}\leftarrow \emptyset$; \Comment{Set of updates' PoA$\&$I $P(h(g_t^k))$ commit for round $t$}
    \State $\mathcal{C}_t \leftarrow \emptyset$; \Comment{Set of $C_t$ elements, the selected clients for the round $t$}
    \State $\mathcal{\lptt} \leftarrow \emptyset$; \Comment{Set of local proofs for values}
\Statex
\Procedure{receive}{$\langle \updtt, g_t^k \rangle$}\Comment{Update $g$ by client $k$ for round $t$}
    \If{$k \in \mathcal{C}_{t}$} 
        \State {\sf buildPoA$\&$I}${(\updtt, g_t^k}$);\label{algo-server-update-poia}
        \State $\mathcal{C}_{t} \leftarrow \mathcal{C}_{t} \setminus k$;
    \EndIf
    \EndProcedure
\Statex
\Procedure{event}{$|\mathcal{P}(G_t)_{cand}|\geq m$ for the first time}\label{algo-server-candidateset}
    \State $C_{t+1} \leftarrow {\sf select\_clients(bc)}$; \Comment{Based on the client's JOIN transactions} \label{algo-server-select-clients}
    \State ${\sf buildPoA\&I}(\clientstt, C_{t+1})$;\label{algo-server-clients-poia}
\EndProcedure
\Statex
\Procedure{deliverTx}{$\langle \nrtt, t, \mathcal{P}(G_t)_{cand}, C_{t+1}, P(C_{t+1} \rangle$}
    \If{${\sf verify\_proof}(\mathcal{P}(G_t)_{cand}) \wedge C_{t+1}) \wedge \mathcal{P}(G_t)_{commit} = \emptyset$}
        \State $\mathcal{P}(G_t)_{commit} \leftarrow \mathcal{P}(G_t)_{cand}$
    \EndIf
\EndProcedure
\Statex
\Procedure{event}{$\mathcal{P}(G_t)_{commit}\neq \emptyset$ for the first time}
    \State $G_t \leftarrow {\sf getAll}(\mathcal{P}(G_t)_{commit}\setminus \mathcal{P}(G_t)_{cand})$;
    \State $w_{t+1}^i \leftarrow {\sf robust\_aggregation}(G_t)$;
    \State ${\sf buildPoA\&I}{(\modtt,w_{t+1}^i)}$;
\EndProcedure
\Statex
\Procedure{buildPoA$\&$I}{$\tagtt, v$}:
    \If{{\sf locally\_valid}($\tagtt, v$)}
        \If{$\tagtt \neq \clientstt$} \Comment{Only models and updates are sent to the storage}
            \State {\sf send} {$\langle h(v),v \rangle$} to all $r \in {R}$; \Comment{Each storage replica, upon reception, checks if $h(v)=v$ before storing $v$ with hash-key $h(v)$}
            \If{$\tagtt = \modtt$}
                \State {\sf send} {$\langle \modtt, v \rangle$} to all $j \in {S}$;
            \EndIf
        \EndIf
    \State {\sf send} {$\langle \lptt, \tagtt, h(v)_i\rangle$} to all $j \in {S}$; \Comment{$h(v)_i$ is signed by server $i$}
    \EndIf
\EndProcedure
\Statex
\Procedure{receive}{$\langle \modtt, v \rangle$} for the first time
    \If{{\sf locally\_valid}($\tagtt, v$)}
        \State {\sf send} {$\langle h(v),v \rangle$} to all $r \in {R}$;
        \State {\sf send} {$\langle \lptt, \tagtt, h(v)_i\rangle$} to all $j \in {S}$;
    \EndIf
\EndProcedure
\Statex
\Procedure{receive}{$\langle \lptt, \tagtt, h(v)_j \rangle$}
    \State $\mathcal{\lptt} \leftarrow \{(\lptt, \tagtt, h(v)_j)\}$;
    \If{$\mathcal{\lptt}$  contains $f+1$ $(\lptt, \tagtt, h(v)_*)$}
        \State $P(h(v)) \leftarrow$ $f+1$ $(\lptt, \tagtt, h(v)_*)$;
        \If{$\tagtt=\updtt$}
            \State $\mathcal{P}(G_t)_{cand} \leftarrow \mathcal{P}(G_t)_{cand} \cup P(h(v))$
        \ElsIf{$\tagtt=\clientstt$}
            \State ${\sf sendTx}\langle \nrtt, t, \mathcal{P}(G_t)_{cand}, C_{t+1}=v, P(h(v))) \rangle$ to bc;
            \State \Comment{If $t=t_{fin}$, the last round, then rather than $C_t$, this transaction carries the \poia of the rewarding information.}
        \ElsIf{$\tagtt = \modtt$}
            \State ${\sf send}(\modtt, t, P(h(v))$ to clients;
            \State $t \leftarrow t + 1$;
        \EndIf
        \EndIf
\EndProcedure
\end{algorithmic}
\end{algorithm}

\begin{algorithm}
\scriptsize
\caption{Read from the key-value store at process i}\label{algo:read}
\begin{algorithmic}[1]
\Procedure{get}{$P(\ell)$}\label{algo-read-get}
\If{${\sf verify\_proof}(P(\ell))$} \Comment{verify if $h(\ell)$ is signed by $f_s+1$ servers}\label{algo-read-check}
\State {\bf return} ${\sf read}(P(\ell))$;
\EndIf
\EndProcedure
\Statex
\Procedure{getAll}{$\mathcal{P}(Set)$} and $ ({\sf verify\_proof}(P(\ell)), \forall \ell \in Set)$:\label{algo-read-getall}
\State $buffer \leftarrow \emptyset$;
\ForAll{$P(\ell) \in \mathcal{P}(Set)$}
\State $buffer \leftarrow buffer \cup {\sf read}(P(\ell))$; \label{algo-read-getall-if}
\EndFor
\State {\bf return} $buffer$;\label{algo-read-getall-return}
\EndProcedure
\Statex
\Procedure{read}{$(P(\ell))$}:\label{algo-read-read}
\State ${\sf send} {\langle \tt QUERY,(h(\ell)) \rangle}$ to all $r \in {R}$; \label{algo-read-request}
\State {\bf wait until} {received}{$(\tt REPLY, \ell')$ and $h(\ell)==h(\ell')$} {\bf return} $\ell'$;\label{algo-read-return}
\EndProcedure
\end{algorithmic}
\end{algorithm}

\subsection{Solution workflow and algorithms}\label{subsec:workflow}

We outline the workflow for an FL task, denoted $fl_i$, from the moment the model owner inputs the model for training to the moment the model owner retrieves the trained model.
Algorithms~\ref{algo:server}-\ref{algo:read} detail the core part of the process (pseudo-code does not explicit any particular $fl_i$, as it can be easily generalized to handle multiple FL tasks in parallel). Algorithm~\ref{algo:server} details the server pseudo-code, as described from step 4 to step 7 in the following workflow, and how the \poia works. Algorithm~\ref{algo:read} specifies how processes (clients, servers or model owner) interact with the secure distributed key-value store to get a value once they collect its PoA$\&$I. Due to its simplicity, we omit the pseudo-code for the replica storage process. Instead, we describe its behavior in comments within Algorithms~\ref{algo:server} and~\ref{algo:read} when needed.  \textit{Workflow steps:}
\begin{enumerate}
    \item The process starts with the model owner that sends a transaction to the blockchain to run a new FL task $fl_i$. This transaction carries all the information (\textit{e.g.} the hash-key of the model to train in the store) and $param_{fl_i}$, the parameters necessary for the task execution. The set-up round $t=0$ starts;
    
    \item Once the $fl_i$ task is registered on the blockchain, all clients and servers are aware of it. Each client sends a transaction to the blockchain to state its interest to join (if any). Such a transaction stakes a certain amount of coins\footnote{``Staking" in the context of cryptocurrency refers to the process of actively participating in the operations of a blockchain network. This involves locking up a certain amount of cryptocurrency as collateral to perform these tasks. } for $fl_i$, witnessing the client commitment to participate. When the client's transaction is registered on the blockchain we say that the client is registered for $fl_i$. Servers wait for sufficiently many clients to be registered (the number of awaited clients is set in $param_{fl_i}$).
    
    \item Once there are sufficiently many clients registered for $fl_i$, each server selects deterministically out of them a subset $C_{1}$ to perform the learning for the first round and builds $P(C_{1})$, the PoA$\&$I for it (cf. Algorithm~\ref{algo:server}). Finally, it sends a transaction ${\tt START\_TASK}$ to announce to the clients in $C_{1}$ that they are selected for the first round. Once this transaction is registered on the Blockchain the set-up round is over and the computation moves to the first learning round $t=1$. 
    
    \item \label{loop:first-step} Each client $k$ in $C_{t}$, retrieves a previously aggregated model associated to a \poia, (cf. Algorithm~\ref{algo:read}), performs the learning on its local data to produce the update $g_t^k$, signs it, and sends it directly to all servers;
    
    \item Each server upon reception of $g_t^k$ for the first time, builds a PoA$\&$I for it (cf. Algorithm~\ref{algo:server}).

    \item When a server collects sufficiently many updates's PoA$\&$I for round $t$ (as determined in $param_{fl_i}$), it has a candidate update set, $\mathcal{P}(G_t)_{cand}$. It computes $C_{t+1}$, the clients subset to perform the learning in the next round and build $P(C_{t+1})$. Finally, it sends a transaction $\nrtt$ (New Round) with  $\mathcal{P}(G_t)_{cand}$, $C_{t+1}$ and $P(C_{t+1})$;
    
    \item When a server $i$ sees a $\nrtt$ transaction on-chain for the first time for a round $t$, it retrieves the missing updates and the previous model (if necessary) from the storage;

    \item \label{loop:last-step} When a server $p$ has all the information, it computes the aggregated model $w_t^p$ (by performing robust aggregation to Byzantine clients as defined in Section~\ref{par:fedavg}), sends it to the other servers\footnote{It suffices having only $f_s+1$ selected servers that diffuse their aggregated model to have at least one $\epsilon$-valid model that obtains a \poia. We omit that to keep the code simple.} to build a PoA$\&$I for it. Finally, it sends the PoA$\&$I for $w_t^p$ to clients (which is used in steps 4 to retrieve the model). Round $t$ ends and we move to $t+1$.
    
    \item The workflow loops between steps \ref{loop:first-step} and \ref{loop:last-step} (a round) as long as the learning does not halt (the halt condition is set in $param_{fl_i}$). At the end of the learning, each server produces information to assess quantitatively clients' contributions to the $fl_i$ task, builds a PoA$\&$I and sends a transaction with those information. This transaction allows clients to be automatically rewarded and the model owner to retrieve the model from the storage.   
\end{enumerate}
Interestingly, as we detailed, for each round $t\geq 1$ the blockchain registers a single transaction per round.\footnote{In the practice there is one transaction per server, but processes wait only for the first valid one on-chain to proceed with their computation.}
Fig. \ref{fig:sequence diagram} details the workflow, involving replicated servers, storage replicas, a blockchain, clients, and a model owner. In particular, we use a replicated activation box image on the server lifeline, corresponding to the {\sf build PoA$\&$I} operation. With some abuse of notation, this represents that this operation involves all servers together.

\begin{figure*}[!t]
    \centering
    \includegraphics[width=.95\linewidth]{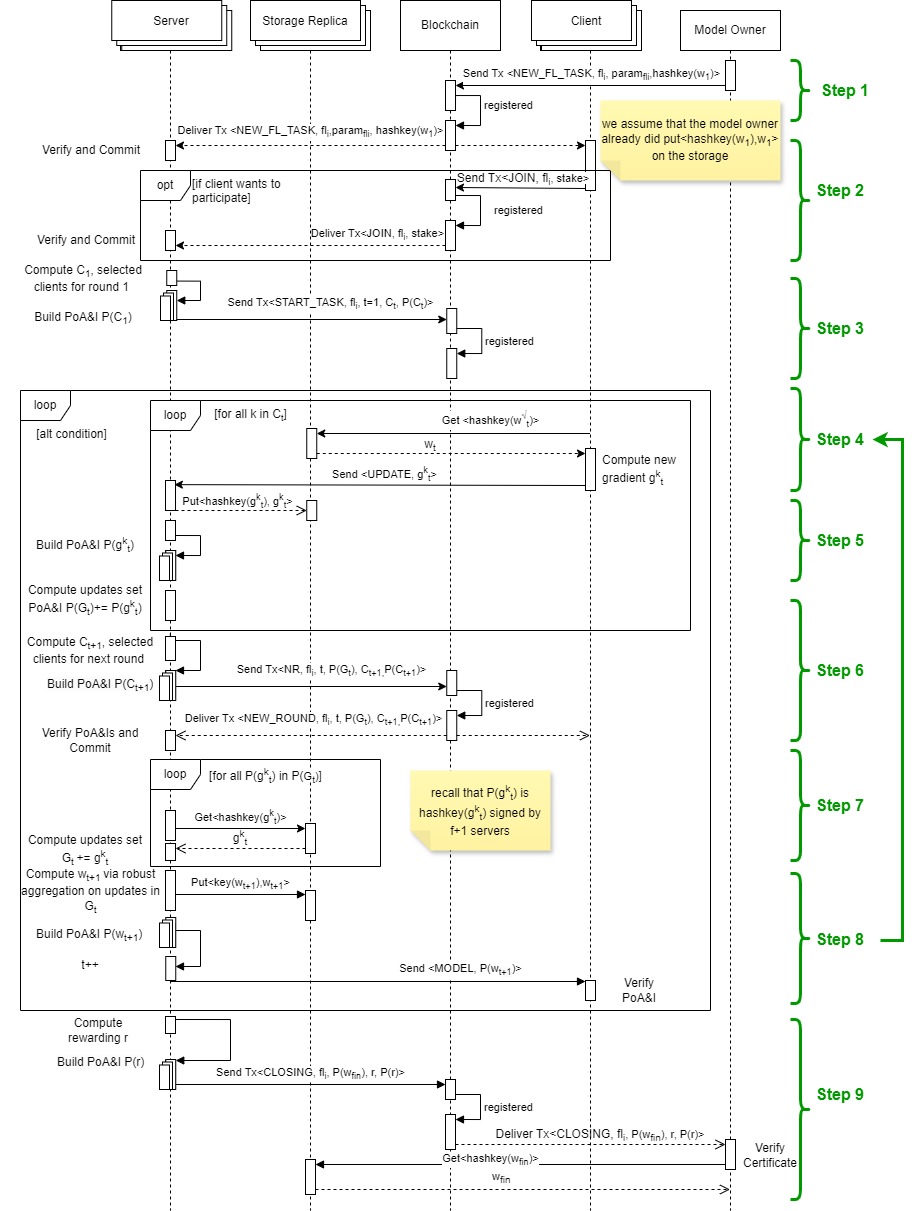}
    \caption{Sequence diagram of the Fantastyc workflow.}
    \label{fig:sequence diagram}
\end{figure*}

\subsubsection{{\sf BuildPoA$\&$I} - Computational Integrity check}\label{par:povs}

The \poia production applies to the client set $C_t$, client $k$'s update $g_t^k$, aggregated model $w_t$, and rewarding information $r$. Each of these elements undergoes a {\sf locally\_valid} check (cf. Algorithm~\ref{algo:server}).
The specific validation process depends on their type. When handling the update $\langle \updtt, g_t^k\rangle$, the check involves verifying the correct formatting of $g_t^k$ and ensuring that $k \in C_t$.
In client set case, as $C_t$ results from the same deterministic computation executed by all correct servers, there is no need for additional verification.
The same applies for rewarding information $r$. However, the aggregated model, resulting from a non-deterministic computation, may slightly differ across correct servers. Each server $p$ potentially produces a different $w_t^p$. As formalized in the next section, we allow for an $\epsilon$ gap between aggregated models. Thus, $p$ considers another model $w_t^q$ valid if $\epsilon$ close to $w_t^q$.

\subsubsection{Floating points and non-determinism handling}\label{sec:non-det}

While servers perform computations using the same update set, slight variations arise due to the non-determinism of floating-point representation. This discrepancy poses a challenge when a server $i$ has to validate the model $w_j^t$ produced by server $j$ having its own model $w_i^t$ as reference.

On a positive note, the disparity between computed models is small enough (cf. note~\ref{footnote:floating}) to introduce an acceptance vector gap that we denote as $\epsilon \in \Re^d$. We assume that $\epsilon$ is static and provided by the model owner when specifying $fl_i$. We assume that $\epsilon$ is sufficiently large to prevent a deadlock: a round in which there are no models issued by servers that obtain a \poia (if their difference is bigger than $\epsilon$).

\begin{assumption}\label{ass:epsilonmodels}
    Let $\mathcal{W}_{fl_i,t}^C$ be the set of all aggregated models produced by correct servers in a round $t$ for the task $fl_i$ via a robust aggregation function as defined in Section~\ref{par:fedavg}. Let $\epsilon \in \Re^d$ be the acceptance vector gap for the $fl_i$ task. We assume that each pair of models in $\mathcal{W}_{fl_i,t}^C$ is $\epsilon$-close.
\end{assumption}
We say that an aggregated model $w^t$ is $\epsilon$-valid if it is $\epsilon$-close to at least one model in $\mathcal{W}_{fl_i,t}^C$.

\subsection{Correctness arguments}\label{sec:proofs}

In the next we consider the following Assumption:
\begin{assumption}\label{ass:selectedClients}
    The set of clients that join an FL task is such that {\sf select\_clients} returns a set that contains less than half Byzantine clients, enough to produce sufficiently many updates at each round to compute the aggregation robust to Byzantine clients (as defined in Section~\ref{par:fedavg}).
\end{assumption}

With this assumption, we abstract the implementation of the selection function, assuming only that the selection is verifiable. In practice, the selection function can be secret to ensure resilience against an adaptive adversary that can choose which clients to corrupt, or it can be public if only a static adversary is assumed.\\

\subsubsection{Safety}

Let $C^C_t$ and $S^C$ be the set of selected correct clients at round $t$ and the set of correct servers respectively. 
\begin{lemma} (availability and integrity)\label{lem:poia}
    For any $({\tagtt}, v)$, if exists a valid $P({\tagtt},v)$, then ${{\sf locally\_valid}}({\tagtt}, v)$ holds for at least one correct server, and eventually, $v$ is available from the distributed storage. 
\end{lemma}

\begin{proof}[Sketch]
    In Algorithm \ref{algo:read}, a process invokes a {\sf read} operation either in {\sf GET} ({line \ref{algo-read-get}}) or in {\sf GETALL} ({line \ref{algo-read-getall}}) if the input is a \poia.
    By definition, a $P({\tagtt},v)$ contains $f_s+1$ different 
    signatures on the same hash-key $h(v)$. Hence, at least one correct server $p$ signed.
    $p$ executes line 27, Algorithm~\ref{algo:server}, if ${\sf locally\_valid}({\tagtt},v)$ holds and after 
    having sent $\langle h(v),v \rangle$, to all storage replicas (lines 22 and 24). 
    (consequently, ${{\sf locally\_valid}}({\tagtt}, v)$ holds for at least one correct server among the servers that signed).
    Eventually, all correct replicas receive $\langle h(v),v \rangle$. As the hash of $v$ corresponds to the hash-key $h(v)$, correct replicas store it forever (it is not possible to modify it, e.g. sending $\langle h(v),v' \rangle$). 
    Let $r$ be the first correct replica that stores $\langle h(v),v \rangle$ and let $t$ be this time.
   
    Let $q$ be a correct process that given $P({\tagtt},v)$ queries the key-value store for $h(v)$ after $t$. $q$ executes line~\ref{algo-read-get}, Algorithm~\ref{algo:read}. After verifying the proof validity (line~\ref{algo-read-check}), $q$ invokes a {\sf read} operation over the storage, sending a QUERY message to all replicas (line~\ref{algo-read-request}) and returns as soon as he gets a value, whose hash corresponds to the requested one. We need to prove that the {\sf read} terminates.
    By assumption, there are $f_r+1$ replicas. 
    As line~\ref{algo-read-request} occurs after $t$, then at least $r$ receives the query message and replies with $v$ to $q$. 
    The {\sf read} returns as $p$ receives the replay $\ell'$ from $r$. As sent by $q$, the hash of $\ell'$ matches with the requested hash-key, so the {\sf read} returns $v$ at line \ref{algo-read-return}.
\end{proof}

\begin{theorem}(Safety)
    If a model ownerobtains a $w_{fin}$ then $w_{fin}$ is $\epsilon$-valid 
    and issued from a sequence of only $\epsilon$-valid  models.
\end{theorem}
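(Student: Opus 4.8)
The plan is to show that the only way the model owner can come to possess $w_{fin}$ is through a valid \poia for it, and then to convert that proof into $\epsilon$-validity via Lemma~\ref{lem:poia}. First I would observe that, in the workflow, the model owner retrieves the trained model from the store using the final transaction (step~9), which anchors $w_{fin}$ by carrying $P(\modtt, w_{fin})$; thus obtaining $w_{fin}$ presupposes the existence of a valid \poia for it. Since reading from the store (Algorithm~\ref{algo:read}) checks the proof before returning, the value the owner actually reads is exactly the one certified by that proof.

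Second, I would apply Lemma~\ref{lem:poia} to this proof: the existence of $P(\modtt, w_{fin})$ guarantees that $\text{{\sf locally\_valid}}(\modtt, w_{fin})$ holds for at least one correct server $q$, and that $w_{fin}$ is eventually available in storage. By the semantics of {\sf locally\_valid} for the model tag (Section~\ref{par:povs}), $q$ validates $w_{fin}$ only if it is $\epsilon$-close to its own locally computed aggregate $w^q$; since $w^q \in \mathcal{W}_{fl_i,t_{fin}}^C$, the definition of $\epsilon$-validity is met and $w_{fin}$ is $\epsilon$-valid.

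Third, for the ``sequence of only $\epsilon$-valid models'' claim, I would argue by induction over the rounds $t = 1, \ldots, t_{fin}$, with invariant that every model $w_t$ feeding a subsequent round carries a valid \poia. The structural fact I rely on is step~4 of the workflow: a correct client in $C_{t+1}$ computes its update on $w_t$ only after retrieving $w_t$ via its \poia, so the protocol advances past round $t$ only if $w_t$ possesses a valid $P(\modtt, w_t)$. Applying the second paragraph's argument to each such $w_t$ yields $\epsilon$-validity at every round, and chaining these establishes that $w_{fin}$ issues from a sequence of only $\epsilon$-valid models.

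The main obstacle I anticipate is the second step: cleanly connecting $\text{{\sf locally\_valid}}(\modtt, \cdot)$ to the set-based definition of $\epsilon$-validity. One must verify that the honest signer witnessing the proof genuinely used its own freshly aggregated model as the comparison reference (so the reference lies in $\mathcal{W}_{fl_i,t}^C$), rather than a stale or externally supplied value; this is where Assumption~\ref{ass:epsilonmodels} and the precise model-validation branch of Algorithm~\ref{algo:server} must be pinned down. A secondary subtlety is ruling out a Byzantine coalition forging a \poia for a non-$\epsilon$-valid model, which follows from the $f_s+1$-signature threshold guaranteeing at least one honest signer, exactly as exploited in Lemma~\ref{lem:poia}.
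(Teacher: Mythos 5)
Your proposal is correct and takes essentially the same route as the paper's own proof: both reduce $\epsilon$-validity of each model to the existence of its \poia, invoke Lemma~\ref{lem:poia} to extract at least one correct server whose ${\sf locally\_valid}$ check for the model tag ties the value $\epsilon$-close to that server's own aggregate in $\mathcal{W}_{fl_i,t}^C$, and chain this across rounds using the fact (step~4 of the workflow) that clients only train on models carrying a \poia. Your write-up is merely more explicit than the paper's brief sketch, in particular in handling $w_{fin}$ via the final transaction and in pinning down the model-validation semantics, but the decomposition and the key lemma are identical.
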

\begin{proof}[Sketch]
    Each $C_t^C$ client computes an update from an aggregated model $w_{t-1}$ if $w_{t-1}$ possesses a PoA$\&$I (cf. step 4). Hence $w_{t-1}$ is $\epsilon$-valid (Lemma \ref{lem:poia}). 
    The proof concludes by extending the reasoning to the entire sequence of models produced during the FL task as each associated to a \poia.
\end{proof}

\subsubsection{Liveness}

\begin{lemma}\label{lem:poialiveness}
    For each $t\in [0, t_{fin}]$ (steps \ref{loop:first-step}-\ref{loop:last-step}), 
    each correct server has sufficiently many PoA$\&$I for ${(\tagtt, v)}, \tagtt\in \{\updtt, \modtt, \clientstt\}$ to terminate the round.
\end{lemma}

\begin{proof}[Sketch]
We start from round $t=0$: 

\noindent{\bf-} $\updtt$: 
    At $t=0$, there is a sufficient number of clients in $C_t^C$ to produce the required updates for the next robust aggregation (Assumption~\ref{ass:selectedClients}). 
    By recording the hash-key of the initial model on-chain, the Model Owner (MO) allows each $k\in C_{t}^C$ to retrieve the model from storage, compute, and send the update $g_{t=0}^k$ to the servers.
    All $S^C$ servers, eventually receive such updates, which pass the ${\sf locally\_valid}$ check.
    Thus, all $p \in S^C$ send the same message $\langle(\lptt, \tagtt, h(g_{t=0}^k)_p)\rangle$ (modulo the signature) to all servers. Each $S^C$ server
    collects $f_s+1$ $(\langle\lptt, \tagtt, h(g_{t=0}^k)_p\rangle)$ and produces a \poia for each update issued by each $k\in C_{t}^C$.

\noindent{\bf-} $\clientstt$: The same reasoning as in the previous case applies. Indeed, $C_t$ is the result of the same deterministic computation at all $S^C$ servers (line 13) which all send $\langle \lptt, \clientstt, h(C_t)_p\rangle$ to all servers (line 28).

\noindent{\bf-} $\modtt$: From the $\updtt$ case and the Assumption~\ref{ass:selectedClients} it follows that eventually line \ref{algo-server-candidateset} Algorithm~\ref{algo:server} holds at each $p \in S^C$.
   $p$ executes line 13 to select the next client set and starts building the \poia for it (line 14).
    Given the $\clientstt$ case, eventually lines 35 and 39 hold for $\tagtt=\clientstt$, and $p$ executes line 40 sending a $\nrtt$ transaction on the blockchain carrying a candidate update set and the next client set.
    The first of those $\nrtt$ transactions on-chain triggers the execution of line 15 at each $p \in S^C$.
    $p$ retrieves the missing updates (if any) using  their \poia (line 19). $p$ computes the new aggregated model $w_p^{t}$, and builds a \poia for it.
    This implies that $w_p^t$ is sent to all other servers (cf. line 27).
    Upon its reception (line 29), line 30 holds at each correct server for Assumption~\ref{ass:epsilonmodels}. Thus, all $q \in S^C$s send $\langle \lptt, \modtt, h(w_p^t)_q\rangle$ (modulo the signature) to all servers.
    All $S^C$ servers receive $f_s+1$ messages to build a PoA$\&$I for $w_p^t$, which triggers the sending of the \poia model to clients and the computation to move to the next round (lines 43-44).

The difference for $t\in [1, t_{fin}]$ is from where clients get the model hash-key. From $t \geq 1$, the hash-key is provided by servers via its \poia (line 43). Once $C_t^C$ clients retrieve one of the $\epsilon$-valid $w_p^t$ the process proceeds as described for $t=0$. 
\end{proof}

\begin{theorem}(Liveness)
     If a model owner sets up an FL task, then eventually the model owner obtains a trained model.
\end{theorem}

\begin{proof}[Sketch]
   Through the iterative application of Lemma~\ref{lem:poialiveness} to each round, it is inferred that when the model ownerstarts an FL task, this progresses round after round. 
    As discussed in step \ref{loop:last-step} and in Algorithm~\ref{algo:server} at line 41, each server $p\in S^C$ knows when the training terminates. 
    Hence, $p$ produces a \poia for the rewarding information (similarly to the $\clientstt$ case in Lemma~\ref{lem:poialiveness}) and for the last model ($t_{fin}$ in Lemma~\ref{lem:poialiveness}).
    $p$ has all the elements to send the final transaction, which is eventually on-chain and accessible to the MO.
\end{proof}

\section{Experimental Evaluation}\label{sec:eval}

This section evaluates the practicality of our solution in a realistic federated learning scenario. Initially, we characterize the required number of clients to be selected in each round of the FL process to achieve good accuracy (Section \ref{sec.experimental-model}). Subsequently, we assess the accuracy loss when implementing confidentiality and robust aggregation methods (Section \ref{subsec:robustness}). Both evaluations are conducted in a scenario with only one server and co-located clients. 
Following that, we delve into the effects of server decentralization in a geo-replicated deployment (Section \ref{subsec:experiments distributed system set up}). 
We evaluate communication overheads introduced by our decentralization approach, specifically analyzing the contributions of the blockchain and the \poia to round latency. As a baseline, we measure the round latency with one server and clients spread across regions on the Internet. This baseline provides the performance benchmark for a classical FL process with multiple clients and a single server -- a non Byzantine fault-tolerant configuration with $f_s=0$. We will study the latency overhead due to different fault tolerance thresholds.

It is worth noting that our evaluation does not deal with the demonstration of high throughput because we focus on only one FL task. Let us also stress that FL operates in rounds that can be lengthy, not imposing strict latency constraints. Furthermore, we do not compare our approach with other blockchain-based methods due significant differences in architectures, use and choice of blockchains, along with the  impracticality of storing large models in the blockchain (see Section \ref{sec:stateoftheart} for a discussion on related work)--  an issue that limits realistic implementations. 

\subsection{System implementation and deployment}
We implemented our blockchain-based solution using Tendermint\footnote{https://docs.tendermint.com/v0.34/} (now CometBFT) \cite{tendermint} as blockchain consensus  and for the implementation of our servers we use the ABCI for decoupling the application logic from the blockchain (see Section~\ref{sec:prelim-blockchain}). All the algorithms presented in Section \ref{sec:workflow} are implemented as part of the application logic. The FL clients and the model owner communicate directly with the blockchain as Tendermint clients. We use the pub/sub/query protocol Zenoh\footnote{https://zenoh.io/} to implement the fault-tolerant key-value store. 
The application logic at the servers is implemented in Python\footnote{https://www.python.org/} and the FL training at the clients is based on the TensorFlow library\footnote{https://www.tensorflow.org/}.

\subsection{Experimental setup - model and dataset}
\label{sec.experimental-model}
We use the Federated Extended MNIST dataset\footnote{https://www.nist.gov/itl/products-and-
services/emnist-dataset} dedicated to the handwritten digit classification task as experimental setup. This MNIST version contains 10 classes and comes with author id in such a way that its federated version was built by partitioning the data based on the writer. The dataset contains 3,579 writers (each with train/test data split for a total of 240,000 train and 40,000 test images) that permits to test the robustness and scalability of our solution. Evaluations were done with a standard CNN (the same as the one used in \cite{McMahan2017}) composed of two $5*5$ convolution layers (the first with 32 channels, the second with 64 channels, each followed with $2*2$ max pooling), a fully connected layer with 512 units and ReLu activation, and a final softmax output layer (1,663,370 total parameters). For local training, the number of epochs and the minibatch size were set to $E=10$ and $B=5$ respectively.

As benchmark, we report the accuracy of the federated model on the test set throughout the training process without malicious clients or countermeasures to prevent information leakage. We experiment with the number $K$ of participants per round, which controls the amount of multi-client parallelism (Fig.\ref{Fig.FLparallelism}.a). We show the impact of varying $K$ about the speedup of the learning process and it appears that there is no more advantage in increasing the client fraction above $K=50$ (Fig.\ref{Fig.FLparallelism}.b). 
Thus, for further experiments we fix $K=50$, which strikes a good balance between computational efficiency and convergence rate.

\begin{figure}[hbt]
	 \centering	\includegraphics[width=1\linewidth,keepaspectratio=true]{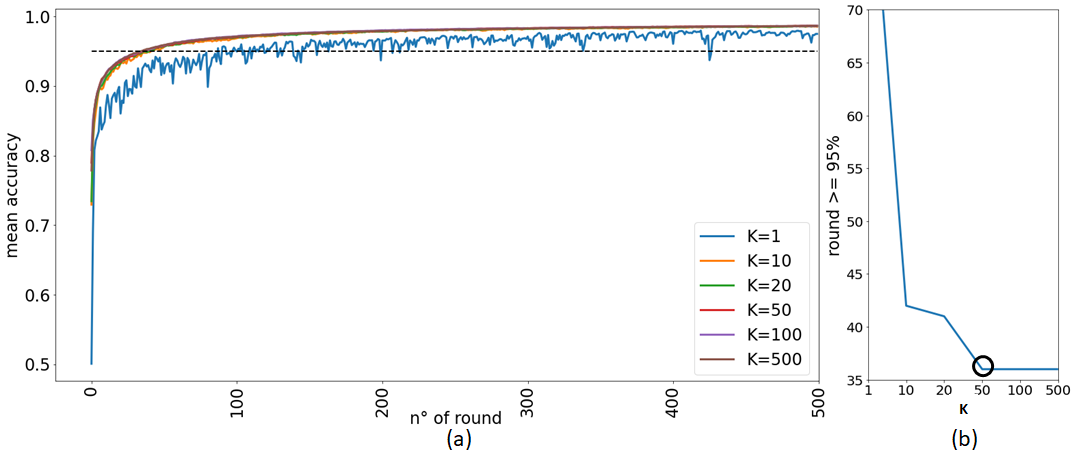}\vspace{-0.2cm}
	    \caption{Study about the impact of multi-client parallelism varying $K$ (while setting $B=5$ ; $E=10$). (a) Test set accuracy vs. communication rounds varying the number of participants (b) Number of communication rounds to reach a target accuracy of 95$\%$ depending on the number of participants per round.}
   \label{Fig.FLparallelism} 
\vspace{-0.2cm}
\end{figure}

\subsection{Privacy and robustness trade-offs evaluation}\label{subsec:robustness}

We have conducted experiments to evaluate accuracy loss due to the addition of our modules dedicated to privacy and integrity issues (Table \ref{tab:datasets}).  Without
malicious clients or countermeasures to prevent information leakage, the use of FedAvg as aggregation rule permits to reach the best accuracy of 98.6\% after 500 rounds of training. We illustrate next the necessary trade-off between privacy, integrity and accuracy.

The risk of privacy leakage decreases as the number of samples $s$ used to build InstaHide composite images increases but in the same time, the training process suffers small accuracy loss of 1.1\% to 3.7\% when s varies from 2 to 5.

To evaluate the robustness of our solution to Byzantine attacks, we consider that there are $f_c = \{ 10, 20,30 \}$ malicious clients among 50 participants at each round of training. By only sending $-\lambda_{boost} g_t^{k}$ instead of $g_t^{k}$ (where $\lambda_{boost}=5$), it is very easy for Byzantine clients to impeded a training process based on FedAvg as the model comes back to random with 10\% accuracy. The use of a Byzantine-tolerant aggregation rule such as the median is thus essential to withstand Byzantine attacks and assure the converge of the model -- as long as the number of honest participants remains a majority. 

\renewcommand{\arraystretch}{1.4}
\begin{table}[h]
    \centering
    \begin{tabular}{|p{0.8cm}||p{0.2cm}p{0.2cm}p{0.2cm}p{0.2cm}p{0.35cm}|p{0.2cm}p{0.2cm}p{0.35cm}|p{0.2cm}p{0.2cm}p{0.35cm}|}
    \hline
         $f_c$ & \multicolumn{5}{|c|}{0} & 10 & 20 & 30 & 10 & 20 & 30 \\
        \hline
         Mix & no & 2 & 3 & 4 & 5 & \multicolumn{3}{|c|}{no} & \multicolumn{3}{|c|}{4}\\
        \hline
        \hline
         FedAvg & 98.6 & 97.5 & 96.5 & 95.6 & 94.9 & 10.0 & 10.0 & 10.0 & 10.0 & 10.0 & 10.0  \\
         \hline
         median & 96.7 & 95.2 & 93.8 & 92.0 & 90.5 & 96.3 & 94.7 & 10.0 & 90.9 & 89.7 & 10.0  \\
         \hline
    \end{tabular}
    \vspace{1em}
    \caption{Model accuracy after 500 rounds of training with $K=50$ clients and considering {FedAvg;median} as aggregation rule. We deploy InstaHide with $s=\{2;3;4;5\}$ mix samples and simulate $f_c=\{0,10,20,30\}$ byzantine clients with $\lambda_{boost}=5$.}
    \label{tab:datasets}
\end{table}

\subsection{Experimental setup - distributed system}\label{subsec:experiments distributed system set up}

We run our experiments in the cloud computing service Akamai-Linode\footnote{https://www.linode.com/}, which offers dedicated CPU and storage resources that can be allocated in geographically distant regions. 
For the experiments, we evaluate a deployment where within each node we collocate a server, a blockchain validator, and a storage replica (a peer, in Zenoh's parlance). We consider several configurations where the nodes and the clients are spread in up to four different regions (EU, US-west, US-east, Asia). At each region we collocate one or more instances (dedicated Linode machines with 32GB of memory) which host the nodes. For the FL clients, we let each Tendermint client (called a proxy client) to generate the workload of a set of realistic participants (recall a participant is an FL client that takes part in the training). At each round, each proxy client sends the payloads from the set of participants it contains in sequential fashion. We deploy up to two instances per region, each of them with a number of nodes that varies between one and 16. We evenly distribute the clients among instances deployed in all regions, where each instance may contain up to 8 proxy clients. Each proxy client is capable of generating the workload for between 2 to 5 participants. Typically, we ensure a balanced distribution by limiting the number of instances to a maximum of four and adjusting the number of participants per proxy client to fall within the range of 10, 50, or 100 participants in total. The communication bandwidth between the regions is in the range of 3Mbps--10Mbps.

\subsection{Scalability evaluation}
We first study the latency associated to a single round and then we show the communication overhead within the overall FL process, with varying number of regions, nodes, and clients.\\
\noindent\textbf{Round analysis.}
Round latency includes the block time (time to add in the blockchain a block that includes the transaction sent in the round), the PoA\&I time (to generate a valid PoA\&I), as well as the contribution to complete steps~\ref{loop:first-step}--\ref{loop:last-step} in the workflow of Section~\ref{sec:workflow}. Importantly, at each round there is only one write access to the blockchain (step~6).


\textit{Blocktime contribution.} Fig.\ref{Fig.Scalability} plots latency results with $K=50$ participants per round and a varying number of nodes, training the CCN model considered in Section~\ref{sec.experimental-model}, whose payload corresponds to 12MB.  Preliminary experiments yield a practical limit of 16 nodes, above of said limit the latency degrades very quickly due to depletion of resources at the Linode instances. The top two curves correspond to round latency with payload 12MB, with either four regions and one instance per region (blue) or two regions and two instances per region (orange). The best round latency with 16 nodes is 96s with two regions. The next two curves show the round time respectively with a mocked model of size 6KB. The best latency with 16 nodes is 20s with four regions. 
Finally, the base line plots the block times associated to the configurations of the four curves above (green and red for 12MB payload and four or two regions respectively, purple and brown for 6KB payload and four or two regions respectively).
These experiments confirm that the block time is independent of the payload size, which is to be expected given Section~\ref{subsec:efficient-use} and smoothly increases with the number of nodes in nominal network conditions. 

\begin{figure}[hbt]
	 \centering	\includegraphics[width=0.98\linewidth,keepaspectratio=true]{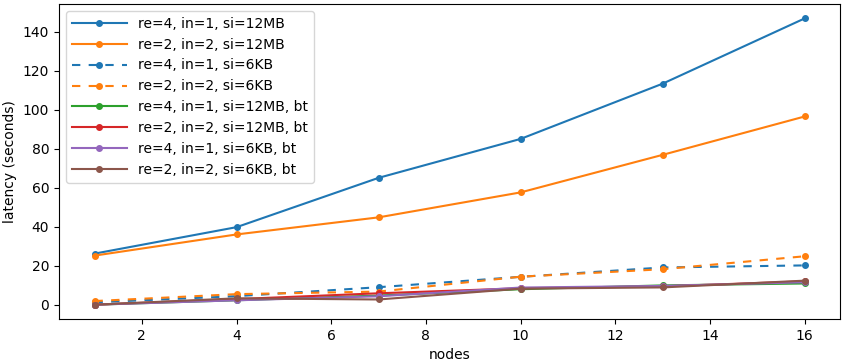}
        \vspace{-0.3cm}
	    \caption{Evaluation of the impact of the number of nodes with a constant number of 50 clients.}
   \label{Fig.Scalability} 
\end{figure}

\textit{\poia contribution.}
Fig.\ref{Fig.PoAI} presents the latency breakdown across different intervals of the learning round, maintaining the same client configuration as described earlier for 4, 10, 16 nodes, respectively, evenly deployed over 2 regions, each with 2 instances.  In the plot, the time for the client to gather the model from the storage (computed in the previous round) and verify model's \poia (step 4) is represented in blue, the time for servers to collect client updates sent by clients and generate the PoA\&I (step 5) is in orange, the block time (step 6) is in green, the time for servers to verify updates'  \poia and get the last model from the storage (step 7) is in red, and the time to aggregate updates into the new model and generate the PoA\&I of the aggregated model (step 8) is in purple. 

\begin{figure}[hbt]
	 \centering	\includegraphics[width=0.98\linewidth,keepaspectratio=true]{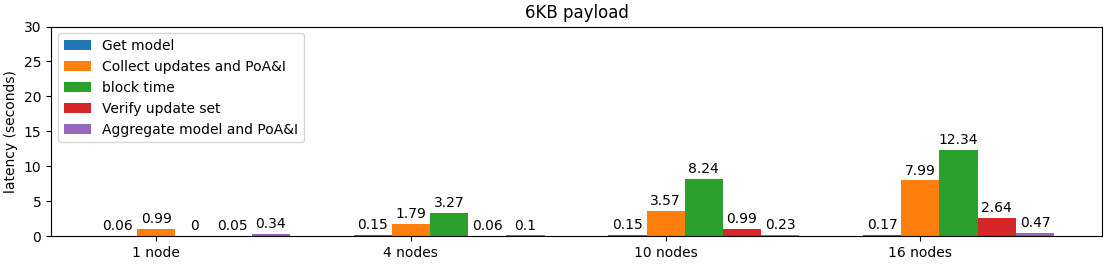}
     \includegraphics[width=0.98\linewidth,keepaspectratio=true]{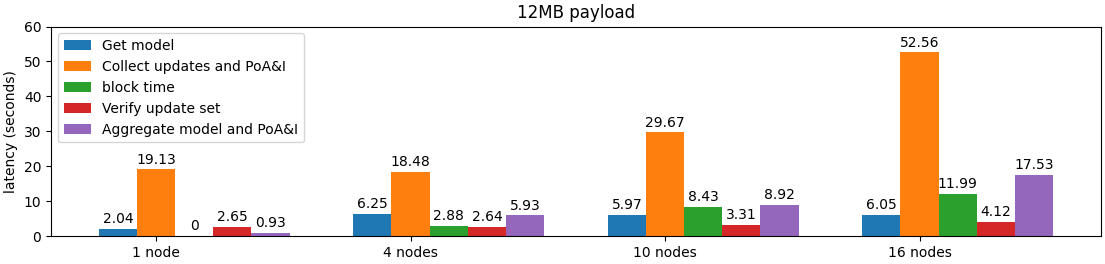}
        \vspace{-0.2cm}
	    \caption{Breakdown of latency within a round with 50 clients.}
   \label{Fig.PoAI} 
\end{figure}


As for the \poia contribution to latency for client updates (included in the orange bar),  we can see that the \poia generation for client updates is quick; witnessed by the comparable latency between the baseline (without \poia) and the 4-node configuration. This is due to a voting process involving small-sized votes (less than 1KB) and the fact that each \poia production is run asynchronously as soon as a client update is received by the server, and this for each update received in the round.

The purple bar shows the time spent to generate the \poia for the model. For this \poia, each node publishes the model for the other server nodes. The only linear increase in time with the number of nodes is because the latency is dominated by the publishing of the model, while the latency to get $f_s+1$ votes is negligible, due to the extremely small size of quickly published vote messages. As for the blue bar, the client only needs to retrieve one model with a valid \poia. This explains the constant contribution of the step 4 no matter of the number of nodes. Without a \poia, each client would be required to wait for at least $f_s+1$ computed models,  which would have a prohibitive cost with a large number of client participants.  


\noindent \textbf{Communication overheads.} For a comprehensive evaluation of the results, Table \ref{tab:faulttoleranceoverhead} illustrates the communication overhead for various fault-tolerance thresholds $f_s$, for a number of participants per round $K=10$, $K=50$, $K=100$ (which implies $f_c$=4, $f_c$=20 and $f_c$=40). 
We can consider the threshold for servers, such that $n_s=2f_s+1$, but since in the considered deployment the server is collocated with a blockchain validator in the same node, we can also fairly consider $n_s=3f_s+1$ -- this depends on granularity of the failure model, at node level or at software component level. We consider both, distinguishing them within the interval  [$f_s^{min},f_s^{max}$].  To provide a clearer understanding of the overhead, we present the total time \textbf{T} spent for training calculated as the latency per round multiplied by $500$, and time to reach good accuracy $T_{a}$ ($>95\%$), which is 37 rounds for $K\ge 50$ and 43 rounds for $K=10$ (Section \ref{sec.experimental-model}). 
From the results it can be noted than on average for the smaller model the overhead increases linearly with the number of nodes. On the other hand, for the large model, the overhead increases sub-linearly with the number of nodes. In general, our approach is able to optimise the case of the larger models.  As observed in the latency breakdown (Fig.\ref{Fig.PoAI}), this is due to the
time for a client to retrieve and verify the new model from its \poia, as it waits for the first good reply (blue bar) making this time constant with respect to the number of nodes,
the asynchronous process to generate a \poia (orange bar) leveraging parallelism when the number of servers increases, and an acceptable contribution of the block time experienced in nominal network conditions.

\begin{table}[h]
    \centering
    \begin{tabular}{|p{2,45cm}||p{1.5cm}p{1.5cm}p{1.7cm}|}
    \hline
         $n_s$ & 1 (baseline) & 4 & 16 \\
    \hline
         $[f_s^{min},f_s^{max}]$ & [0,0] & [1,1] & [5,7] \\
         \hline
           [\textbf{T},$T_a$] 6KB,K=10 & [\textbf{7}, 0.6]min & [\textbf{12}, 1.1]min & [\textbf{135}, 11.6]min\\
        \hline
          [\textbf{T},$T_a$] 6KB,K=50  & [\textbf{12}, 0.9]min & [\textbf{45}, 3.3]min & [\textbf{197}, 15]min \\
          \hline
         [\textbf{T},$T_a$] 6KB,K=100& [\textbf{17}, 1.3] min & [\textbf{70}, 5.2] min & [\textbf{247}, 18]min\\
         \hline
         \hline
          [\textbf{T},$T_a$] 12MB,K=10 & [\textbf{1}, 0.09] hr & [\textbf{2.4}, 0.2] hr & [\textbf{8.5}, 0.73] hr\\
          \hline
          [\textbf{T},$T_a$] 12MB,K=50  & [\textbf{3.4}, 0.25] hr & [\textbf{5}, 0.4] hr & [\textbf{13}, 1] hr \\
          \hline
          [\textbf{T},$T_a$] 12MB,K=100& [\textbf{6.7}, 0.5] hr & [\textbf{9.4}, 0.7] hr & [\textbf{15.3}, 1.1] hr\\
         \hline
    \end{tabular}
    \caption{Communication overhead with different fault thresholds.}
    \label{tab:faulttoleranceoverhead}
\end{table}

\section{Conclusion and Perspectives}\label{sec:conclusion}

This paper introduced Fantastyc, a blockchain-based federated learning system that is provably secure, improves Byzantine tolerance, limits the risks of sensitive information leakage and showcases practical applicability by amortizing the cost of growing workloads with the number of server replicas in nominal network conditions.  
Future work will primarily concentrate on two concurrent axes: the first involves further improving the underlying communication overhead by implementing gradient compression techniques, 
 studying privacy/utility trade-offs. The second axis revolves around incentive mechanisms that can leverage the participation count of FL contributors in each round to address the issue of free-riding\footnote{Free-riding in this context is related to the problem of an agent that will prefer not endeavoring the cost of learning while presenting fake gradients.}. Indeed, even if completely eliminating free-riding without access to each client's specific dataset remains an unsolved challenge \cite{Fraboni2021},  robust incentive mechanisms encouraging agents to exhibit honest behavior can be designed, assuming the presence of an auditor \cite{Gao2019}. Blockchain-based auditor mechanisms are then a promising avenue of research. 
As part of our future work, we also explore a mechanism for dynamically adjusting $\epsilon$. 
That would allow to lift Assumption \ref{ass:epsilonmodels}, enabling to start the FL process with a small $\epsilon$, without risking a deadlock. 
Finally, we could also make our solution robust to backdoor attacks \cite{Bagdasaryan2020}. As our architecture is modular, it seems straightforward to integrate a post-training defense such as the one presented in \cite{castellon2023}. 

\section*{Acknowledgements}
This work is funded by The Fantastyc Carnot project.

\bibliographystyle{IEEEtran}

\bibliography{bib}

\end{document}